\newtheorem{theorem}{Theorem}
\crefname{theorem}{Theorem}{Theorems}
\newtheorem{definition}{Definition}
\crefname{definition}{Definition}{Definitions}
\newtheorem{lemma}{Lemma}
\crefname{lemma}{Lemma}{Lemmas}
\newtheorem{observation}{Observation}
\crefname{observation}{Observation}{Observations}
\newcommand{\norm}[1]{\lVert#1\rVert}
\newcommand{\R}{\mathbb R}
\title{Practical Data-Dependent Metric Compression\\
with Provable Guarantees\footnote{Code available at \url{https://github.com/talwagner/quadsketch}.}}
\author{
  Piotr Indyk\thanks{Authors ordered alphabetically.} \\
  MIT \\
  \texttt{indyk@mit.edu}
  \and
  Ilya Razenshteyn\footnotemark[2] \thanks{Supported by Simons Foundation Junior Fellowship.} \\
  Columbia University \\
  \texttt{ilyaraz@mit.edu}
  \and
  Tal Wagner\footnotemark[2] \\
  MIT \\
  \texttt{talw@mit.edu}
}
\begin{document}
% \nipsfinalcopy is no longer used

\maketitle

\begin{abstract} 
 We introduce a new distance-preserving compact representation of multi-dimensional point-sets. Given $n$ points in a $d$-dimensional space where each coordinate is represented using $B$ bits (i.e., $dB$ bits per point), it produces  a representation of size $O( d \log(d B/\epsilon) + \log n)$ bits per point from which one can approximate the distances up to a factor of $1 \pm \epsilon$. Our algorithm almost matches the recent bound of~\cite{indyk2017near} while being much simpler. We compare our algorithm to Product Quantization (PQ)~\cite{jegou2011product}, a state of the art heuristic metric compression method. We evaluate both algorithms on several data sets: SIFT (used in \cite{jegou2011product}), MNIST~\cite{lecun1998mnist}, New York City taxi time series~\cite{guha2016robust} and a synthetic one-dimensional data set embedded in a high-dimensional space. 
With appropriately tuned parameters, our algorithm produces representations that are comparable to or better than those produced by PQ, while having provable guarantees on its performance.
 \end{abstract} 

\section{Introduction}

Compact distance-preserving representations of high-dimensional objects are very useful tools in data analysis and machine learning. They compress each data point in a data set using a small number of  bits while preserving the distances between the points up to a controllable accuracy. This makes it possible to run data analysis algorithms, such as similarity search, machine learning classifiers, etc,  on data sets of reduced size. The benefits of this approach include: (a) reduced running time (b) reduced storage and (c) reduced  communication cost (between machines, between CPU and RAM, between CPU and GPU, etc). These three factors make the computation more efficient overall, especially  on modern architectures where the communication cost is often the dominant factor in the running time, so fitting the data in a single processing unit is highly beneficial. Because of these benefits, various compact representations have been extensively studied over the last decade, for applications such as: speeding up similarity search \cite{broder1997resemblance, indyk1998approximate, kushilevitz2000efficient, torralba2008small, weiss2009spectral, jegou2011product, norouzi2012hamming, shrivastava2014densifying},  scalable learning algorithms~\cite{weinberger2009feature, li2011hashing}, streaming algorithms ~\cite{muthukrishnan2005data} and other tasks. For example, a recent paper~\cite{JohnsonDJ17} describes a similarity search software package based on one such method (Product Quantization (PQ))  that has been used to solve very large similarity search problems over billions of point on GPUs at Facebook. 

%\section{Overview of the state of the art}

The methods for designing such representations can be classified into {\em data-dependent} and {\em data-oblivious}. The former analyze the whole data set in order to construct the point-set representation, while the latter apply a fixed procedure individually to each data point. A classic example of the data-oblivious approach is based on randomized dimensionality reduction  ~\cite{johnson1984extensions}, which states that any set of $n$ points in the Euclidean space of arbitrary dimension $D$ can be mapped into a space of dimension $d=O(\epsilon^{-2} \log n)$, such that the distances between all pairs of points are preserved up to a factor of $1\pm\epsilon$. This allows representing each point using $d (B+\log D)$ bits, where $B$ is the number of bits of precision in the coordinates of the original pointset.
\footnote{The bounds can be stated more generally in terms of the~\emph{aspect ratio} $\Phi$ of the point-set. See Section~\ref{s:formal} for the discussion.}
More efficient representations are possible if the goal is to preserve only the distances in a certain range. 
In particular, $O(\epsilon^{-2} \log n)$ {\em bits} are sufficient to distinguish between distances smaller than $1$ and greater than $1+\epsilon$, independently of the precision parameter~\cite{kushilevitz2000efficient} (see also \cite{raginsky2009locality} for kernel generalizations).
Even more efficient methods are known if the coordinates are binary~\cite{broder1997resemblance, li2011hashing, shrivastava2014densifying}.

Data-dependent methods compute the bit representations of points ``holistically", typically by solving a global optimization problem. Examples of this approach include Semantic Hashing~\cite{salakhutdinov2009semantic}, Spectral Hashing~\cite{weiss2009spectral}  or Product Quantization~\cite{jegou2011product} (see also the survey~\cite{wang2016learning}).  Although successful,  most of the results in this line of research are empirical in nature, and we are not aware of any worst-case accuracy vs.~compression tradeoff bounds for those methods along the lines of the aforementioned data oblivious approaches. 

A recent work~\cite{indyk2017near} shows that it is possible to combine the two approaches and obtain algorithms that  adapt to the data while providing  worst-case accuracy/compression tradeoffs. In particular, the latter paper shows how to construct representations of $d$-dimensional pointsets that preserve all distances up to a factor of $1\pm\epsilon$ while using only $O( (d + \log n) \log(1/\epsilon) + \log (Bn) )$ bits per point. Their algorithm uses hierarchical clustering in order to group close points together, and represents each point by a displacement vector from a near by point that has already been stored. The displacement vector is then appropriately rounded to reduce the representation size. Although theoretically interesting, that algorithm is rather complex and (to the best of our knowledge) has not been implemented. 

\paragraph{Our results.} The main contribution of this paper is QuadSketch (QS), a {\em simple} data-adaptive algorithm, which is both provable and practical. It represents each point using $O( d \log(d B/\epsilon) + \log n)$ bits, where (as before) we can set $d=O(\epsilon^{-2} \log n)$ using the Johnson-Lindenstrauss lemma. Our bound significantly improves over the ``vanilla'' $O(d B)$ bound (obtained by storing all $d$ coordinates to full precision), and comes close to bound of \cite{indyk2017near}. %(see~\Cref{tbl:sketches_related_work} for the comparison).
%and whose compression bounds are close to those of~\cite{indyk2017near}. Specifically, our algorithm represents each point using $O( d \log(d B/\epsilon) + \log n)$ bits.
%This is still more succinct than the bound provided by the Johnson-Lindenstrauss lemma:
%in a typical setting, our bound improves over the latter by $O(\log n/\log \log n)$ bits per point, and is lossier than
%the bound of~\cite{indyk2017near} by only $O(\log\log n)$ bits per point (see~\cref{tbl:sketches_related_work} for the comparison).
At the same time, the algorithm is quite simple and intuitive: it computes a $d$-dimensional quadtree\footnote{Traditionally, the term ``quadtree'' is used for the case of $d=2$, while its higher-dimensional variants are called `` hyperoctrees''~\cite{yau1983hierarchical}. However, for the sake of simplicity, in this paper we use the same term ``quadtree'' for any value of $d$. }  and appropriately prunes its edges and nodes.\footnote{We note that a similar idea (using kd-trees instead of quadtrees) has been earlier proposed in~\cite{arandjelovic2014extremely}. However, we are not aware of any provable space/distortion tradeoffs for the latter algorithm.}

We evaluate QuadSketch experimentally on both real and synthetic data sets: a SIFT feature data set from~\cite{jegou2011product}, MNIST~\cite{lecun1998mnist}, time series data reflecting taxi ridership in New York City~\cite{guha2016robust} and a synthetic data set (Diagonal) containing random points from a one-dimensional subspace (i.e., a line) embedded in a high-dimensional space. The data sets are quite diverse: SIFT and MNIST data sets are de-facto ``standard'' test cases for nearest neighbor search and distance preserving sketches, NYC taxi data was designed to contain anomalies and ``irrelevant'' dimensions, while Diagonal has extremely low intrinsic dimension. We compare our algorithms to Product Quantization (PQ)~\cite{jegou2011product}, a state of the art method for computing distance-preserving sketches, as well as a baseline simple uniform quantization method (Grid). The sketch length/accuracy tradeoffs for QS and PQ are comparable on SIFT and MNIST data, with PQ having higher accuracy for shorter sketches while QS having  better accuracy for longer sketches. On NYC taxi data, the accuracy of QS is higher over the whole range of sketch lengths . Finally, Diagonal exemplifies a situation where the low dimensionality of the data set hinders the performance of PQ, while QS naturally adapts to this data set. Overall, QS performs well on ``typical'' data sets, while its provable guarantees ensure robust performance in a wide range of scenarios. Both algorithms improve over the baseline quantization method. 

% compare QuadSketch to PQ. Describe PQ.  QS can be viewed as a one-level version of QS, with an optimization step. 
%We show that SIFT data (for which PQ was designed) we are comparable, PQ better accuracy for shorter sketches, while QS better accuracy for longer sketches. At the same time, QS is more robust. NYC Taxi. In the extreme case, diagonal.

%The main contribution of this paper is a  {\em simple} data-adaptive algorithm whose compression bounds are close to those of~\cite{indyk2017near}. Specifically, our algorithm represents each point using $O( d \log(d B/\epsilon) + \log n)$ bits. This is still more succinct than the bound provided by the Johnson-Lindenstrauss lemma:
%in a typical setting, our bound improves over the latter by $O(\log n/\log \log n)$ bits per point, and is lossier than
%the bound of~\cite{indyk2017near} by only $O(\log\log n)$ bits per point (see~\cref{tbl:sketches_related_work} for the comparison).
%At the same time, the algorithm is quite simple and intuitive: it computes a $d$-dimensional quadtree and appropriately prunes its edges and nodes.  Empirically,  our algorithm produces representations that, in the high accuracy range, are comparable to those produced by PQ. 

% which differs from the bound of~\cite{IW} mostly due to the presence of $d$ inside the log function.

\section{Formal Statement of Results}
\label{s:formal}
\paragraph{Preliminaries.}
Let $X=\{x_1,\ldots,x_n\}\subset\R^d$ be a pointset in Euclidean space.
A compression scheme constructs from $X$ a bit representation referred to
as a~\emph{sketch}. Given the sketch, and without
access to the original pointset, one can~\emph{decompress} the sketch into
an approximate pointset $\tilde X=\{\tilde x_1,\ldots,\tilde x_n\}\subset\R^d$.
The goal is to minimize the size of the sketch, while approximately preserving
the geometric properties of the pointset, in particular the distances and near neighbors.

In the previous section we parameterized the sketch size in terms of the number
of points $n$, the dimension $d$, and the bits per coordinate $B$.
In fact, our results are more general, and can be stated in terms of the 
~\emph{aspect ratio} of the pointset, denoted
by $\Phi$ and defined as the ratio between the largest to smallest distance,
\[ \Phi = \frac{\max_{1\leq i<j \leq n}\norm{x_i-x_j}}{\min_{1\leq i<j \leq n}\norm{x_i-x_j}} . \]

Note that  $\log(\Phi) \leq \log d +B$, so our bounds, stated in terms of $\log \Phi$, immediately imply analogous bounds in terms of $B$.

We will use $[n]$ to denote $\{1,\ldots,n\}$, and $\tilde O(f)$ to suppress
polylogarithmic factors in $f$.

\paragraph{QuadSketch.}
Our compression algorithm, described in detail in Section~\ref{s:compression},  is based on a randomized variant
of a quadtree followed by a pruning step.
In its simplest variant, the trade-off between the sketch size and compression quality
is governed by a single parameter $\Lambda$.
%$L$ is the number of levels in the quadtree (before pruning),
%which corresponds to the number of precision bits stored in the sketch for each point.
Specifically, $\Lambda$  controls the pruning step, in which
the algorithm identifies ``non-important'' bits among those stored in the
quadtree (i.e.~bits whose omission would have little effect on the approximation
quality), and removes them from the sketch.
Higher values of $\Lambda$ result in sketches that are longer but have better approximation quality.

%Loosely speaking, it identifies \emph{non-important} bits in the
%binary expansion of the point coordinates, in the sense that zeroing those bits
%will not significantly effect the relative error of any pairwise distance.
%Those bits will not be stored in the sketch -- instead, only their locations are
%stored, so they can be zeroed during decompression. Thus, our scheme can be viewed
%as a data-dependent rounding scheme, as opposed to an oblivious scheme that
%rounds each coordinate to a fixed number of bits.
%
%The non-important bits of a coordinate are not necessarily the least-significant
%bits in the usual sense -- they may also appear in the middle of the binary expansion
%of the coordinate. This case corresponds directly to identifying a well-separated
%cluster of points in the dataset: if a subset $Y\subset X$ is far away from all
%points in $X\setminus Y$, then we can afford to forget bits describing the large
%distances between $X$ and $X\setminus Y$ (middle bits in the binary expansion), but
%still need to store the internal distances in $Y$ (rightmost bits in the binary
%expansion). As such, while our scheme does not require the pointset to have a
%clustering property, it can identify and exploit a clustering if it is present.
%
%The trade-off between the sketch size and compression quality of QuadSketch
%is governed by two parameters, $L$ and $\Lambda$. The larger they are, the bigger
%the sketch will be, and with better approximation quality. 

\paragraph{Approximate nearest neighbors.}
Our main theorem provides the following guarantees for the  basic variant of QuadSketch: for each point, the distances from that point to all other points are preserved up to a factor of $1 \pm \epsilon$ with a constant probability. 
%the nearest neighbor computed based on the information provided by QuadSketch  is an approximate nearesdecompressed dataset.

\begin{theorem}\label{thm:ann}
Given $\epsilon,\delta>0$, let $\Lambda=O(\log(d\log\Phi/\epsilon\delta))$ and $L=\log\Phi+\Lambda$.
QuadSketch runs in time $\tilde O(ndL)$ and produces
a sketch of size $O(nd\Lambda+n\log n)$ bits, with the following guarantee:
For every $i\in[n]$,
\[ \Pr\left[\forall_{j\in[n]}\norm{\tilde x_i-\tilde x_j} = (1\pm\epsilon)\norm{x_i-x_j} \right] \geq 1-\delta. \]
In particular, with probability $1-\delta$,
if $\tilde x_{i^*}$ is the nearest neighbor of $\tilde x_i$ in $\tilde X$,
then $x_{i^*}$ is a $(1+\epsilon)$-approximate nearest neighbor of $x_i$ in $X$.
\end{theorem}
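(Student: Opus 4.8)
The plan is to analyze the randomized quadtree construction and its pruning step, tracking how much distortion each introduces. First I would set up the quadtree: after a random shift of the pointset inside a bounding box of side length roughly $\Phi$ (rescaled so the minimum distance is $1$), recursively subdivide each cell into $2^d$ children, stopping at cells of side length roughly $\epsilon/(d)$ or after $L = \log\Phi + \Lambda$ levels. Each point $x_i$ is then encoded by the sequence of cells containing it, which is exactly a bit representation of its coordinates truncated to $L$ levels of precision; the $\log n$ term per point comes from encoding the tree structure (sharing prefixes). The key geometric fact is that truncating coordinates at level $L$ introduces an additive error of at most $d \cdot 2^{-\Lambda}$ in each coordinate, hence an $\ell_2$ error of at most $\sqrt{d}\cdot d\cdot 2^{-\Lambda}$ per point, which for $\Lambda = O(\log(d\log\Phi/\epsilon\delta))$ is small compared to the minimum distance $1$, giving additive distortion $\epsilon$ on the smallest distances and a fortiori multiplicative $(1\pm\epsilon)$ on all distances.

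Next I would handle the pruning step, which is where the randomness and the probability $1-\delta$ really enter. The idea is that a long chain of tree nodes each having exactly one child (a ``path'' in the quadtree with no branching) corresponds to a block of coordinate bits that are identical across all points in that subtree, and so can be represented once rather than per-level; more importantly, the algorithm is allowed to \emph{collapse} such chains, rounding away the low-order bits below the point where the subtree branches, as long as the collapsed length is below $\Lambda$. To bound the error this introduces for a fixed query point $x_i$, I would argue that for each other point $x_j$, the level at which the quadtree path separating $x_i$ from $x_j$ first branches is, with high probability over the random shift, not much finer than $\log(1/\norm{x_i - x_j})$ — i.e., the random shift ensures that a pair at distance $t$ is ``split'' by the quadtree at level $\approx \log(1/t)$ up to an additive $O(\log(1/\delta'))$ slack with probability $1-\delta'$. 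Summing the failure probability via a union bound over the $O(\log\Phi)$ relevant distance scales (grouping the $n-1$ other points into dyadic scale classes and using that within a scale the bad event is a single shift event, not $n$ separate ones) yields total failure probability $\delta$ once $\Lambda$ absorbs the $\log(\log\Phi/\delta)$ and $\log d$ terms. Conditioned on this good event, the bits pruned from the representation of $x_i$ relative to $x_j$ are all at scale finer than $\epsilon\norm{x_i-x_j}/\text{poly}$, so $\norm{\tilde x_i - \tilde x_j} = (1\pm\epsilon)\norm{x_i-x_j}$.

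The running time and sketch-size bounds are then essentially bookkeeping: the tree has depth $L$ and $O(n)$ relevant nodes (since each of $n$ points contributes one leaf and internal branching nodes are charged to leaves), each point's path is described by $O(\Lambda)$ ``active'' bits plus pointers costing $O(\log n)$, giving $O(nd\Lambda + n\log n)$; construction touches each point at each of $L$ levels in $d$ coordinates, giving $\tilde O(ndL)$ after handling the sorting/hashing to group points into cells. The final nearest-neighbor corollary is immediate: if all distances from $x_i$ are preserved to $1\pm\epsilon$, then the true nearest neighbor's distance is at most $(1+\epsilon)$ times any other distance in the compressed space, so decompressing and taking the nearest neighbor of $\tilde x_i$ returns a $(1+\epsilon)$-approximate nearest neighbor.

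The main obstacle I anticipate is the second step — precisely controlling the pruning error with only a $1-\delta$ (not high-probability) guarantee while keeping $\Lambda$ logarithmic. The delicate point is that a naive union bound over all $\binom{n}{2}$ pairs would cost an extra $\log n$ inside $\Lambda$, which the theorem does not allow (it only puts $\log n$ in an \emph{additive} term, not inside the logarithm); the fix must exploit that the quadtree's random shift is a single shared source of randomness, so the ``point $x_j$ is separated too late'' events at a common distance scale are highly correlated and can be bounded by $O(\log\Phi)$ scale-wise events rather than $n$ point-wise ones. Getting this correlation argument clean — essentially showing that at each scale the random grid either cuts ``most'' close pairs at the right level or fails wholesale — is the technical heart of the proof.
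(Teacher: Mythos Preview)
Your overall plan is sound and lands on the same algorithm and the same union-bound budget ($d\cdot\log\Phi$ events, which is exactly what $\Lambda=O(\log(d\log\Phi/\epsilon\delta))$ pays for), but you have misidentified the ``technical heart'' and are setting yourself up to work harder than necessary.

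The paper's proof does \emph{not} reason about pairs $(x_i,x_j)$ at all in the probabilistic step. Instead it defines a property of the single point $x_i$ alone: $x_i$ is \emph{padded} at level $\ell$ if its cell in $G_\ell$ contains the ball of radius $\rho(\ell)\asymp \epsilon^{-1}2^{\ell-\Lambda}\sqrt d$ around it, i.e.\ $x_i$ sits far from every axis-aligned boundary of its cell. For a fixed coordinate and level, the random shift puts $x_i$ within $\rho(\ell)$ of a boundary with probability $2\rho(\ell)/2^\ell$; a union bound over the $d$ coordinates and the $\log\Phi$ levels gives failure probability $\delta$ directly. There is no $n$, no $\binom{n}{2}$, no dyadic grouping, and no correlation argument. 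The distance-preservation lemma is then entirely deterministic: if $x_i$ is padded, then for \emph{every} $x_j$, letting $w$ be the lowest common ancestor of the two leaves, padding at level $\ell(w)-1$ forces $\|x_i-x_j\|\ge\rho(\ell(w)-1)$, while the pruning rule guarantees both leaves sit at least $\Lambda$ levels below $w$, so the reconstruction error is $O(2^{\ell(w)-\Lambda}\sqrt d)=O(\epsilon\|x_i-x_j\|)$.

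Your scale-grouping idea (``within a scale the bad event is a single shift event'') is really a roundabout rediscovery of this: the per-scale bad event you want is precisely ``$x_i$ is near the boundary of $G_\ell$'', which does not mention any $x_j$. If you frame it that way from the start, the ``delicate correlation argument'' you anticipate evaporates. Two smaller points: your direction is flipped (you want the separation level not much \emph{coarser} than $\log\|x_i-x_j\|$, not finer), and the paper treats the depth-$L$ truncation and the pruning uniformly---a leaf is just the bottom of a length-$\Lambda$ degree-one chain, so there is no separate truncation analysis.
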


Note that the theorem allows us to compress the input point-set into a sketch and then decompress it back into a point-set which can be fed to a black box similarity search algorithm.  Alternatively, one can decompress only specific points and approximate the distance between them. 

%Searching the compressed dataset can be much faster, in particular if it fits on a GPU or in main memory.
%
For example, if $d=O(\epsilon^{-2}\log n)$ and $\Phi$
is polynomially bounded in $n$,
then~\cref{thm:ann} uses $\Lambda=O(\log\log n + \log(1/\epsilon))$ bits per coordinate 
to preserve $(1+\epsilon)$-approximate nearest neighbors.

The full version of QuadSketch, described in Section~\ref{s:compression}, allows extra fine-tuning by exposing additional parameters of the algorithm. The guarantees for  the full version are summarized by Theorem~\ref{thm:block-ann} in Section~\ref{s:compression}.

%\xxx{Need to mention the generalization to multiple blocks and general $L$}
\paragraph{Maximum distortion.}
We also show that a recursive application of QuadSketch makes it possible to approximately preserve the distances between {\em all} pairs of points. 
This is the setting considered in~\cite{indyk2017near}. 
\iffalse
A large body of work has focused on a stronger guarantee of
preserving all pairwise distances simultaneously up to a relative error (distortion) of $(1\pm\epsilon)$.
In particular this is the guarantee of
the Johnson-Lindenstrauss dimensionality reduction theorem.
\fi
(In contrast, \cref{thm:ann} preserves the distances from any single point.)
%We can obtain this guarantee by a simple scheme that repeats QuadSketch $O(\log n)$ times, with only a constant factor loss in the sketch size.

\begin{theorem}\label{thm:max_distortion}
Given $\epsilon>0$, let $\Lambda=O(\log(d\log\Phi/\epsilon))$ and $L=\log\Phi+\Lambda$.
There is a randomized algorithm that runs in time 
$\tilde O(ndL)$ and produces a sketch of size
$O(nd\Lambda+n\log n)$ bits, 
such that with high probability, every distance $\norm{x_i-x_j}$
can be recovered from the sketch up to distortion $1\pm\epsilon$.
\end{theorem}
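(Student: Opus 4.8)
The plan is to bootstrap the single-point guarantee of \cref{thm:ann} into an all-pairs guarantee by a recursive construction, so that $\Lambda$ need only be set with a \emph{constant} failure probability $\delta$ (which gives $\Lambda=O(\log(d\log\Phi/\epsilon))$) rather than with $\delta=1/\mathrm{poly}(n)$---the latter being what a naive union bound over the $n$ source points would force, at a cost of an extra $\log n$ factor in $\Lambda$. The key structural fact is that \cref{thm:ann} is really a statement about individual points: if a run of QuadSketch preserves every distance emanating from $x_i$---call $i$ \emph{good} for that run---then every pair $(i,j)$ is preserved by that run, so a pair $(i,j)$ can be distorted only when \emph{both} of its endpoints fail to be good. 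It therefore suffices to keep re-sketching only the non-good points until none remain.

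Concretely, the construction runs in rounds. Round $0$ runs QuadSketch on $B_0:=X$ with $\delta=\tfrac12$, producing a sketch $S_0$ together with the set $B_1\subseteq B_0$ of points that are not good for this run; by \cref{thm:ann}, $\Exp|B_1|\le\tfrac12 n$. Round $r$ runs an independent QuadSketch on $B_r$, producing $S_r$ and $B_{r+1}\subseteq B_r$ with $\Exp[\,|B_{r+1}| \mid B_r\,]\le\tfrac12|B_r|$; since $B_r\subseteq X$ we have $\Phi(B_r)\le\Phi$, so the same $\Lambda$ and $L$ may be used in every round. After $t=O(\log n)$ rounds $\Exp|B_t|<1$, so $B_t=\emptyset$ with high probability. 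The final sketch stores $S_0,\dots,S_{t-1}$ and the index sets $B_1,\dots,B_{t-1}$. To recover $\norm{x_i-x_j}$, take the largest $r$ with $i,j\in B_r$ (round $0$ always qualifies) and output $\norm{\tilde x_i-\tilde x_j}$ as decompressed from $S_r$: by maximality at least one of $i,j$ lies in $B_r\setminus B_{r+1}$ and is hence good for round $r$, while the other lies in $B_r$, which is the input to round $r$, so the output is $(1\pm\epsilon)\norm{x_i-x_j}$---the sole failure route being $B_t\ne\emptyset$, of probability $o(1)$. Because every pair is served by a single round, the $(1\pm\epsilon)$ error does not compound.

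For the resource bounds, $|S_r|=O\big(|B_r|(d\Lambda+\log n)\big)$, so the sketch size is $O(d\Lambda+\log n)\cdot\sum_{r<t}|B_r|$, and $\sum_r\Exp|B_r|\le\sum_{r\ge0}2^{-r}n=2n$, giving $O(nd\Lambda+n\log n)$ bits in expectation; a high-probability version follows by running $O(\log n)$ independent copies of the whole construction and outputting one whose final bad set is empty and whose total size is $O(nd\Lambda+n\log n)$---both conditions are checkable by the algorithm, and with high probability at least one copy meets both. Likewise the running time is $\sum_{r<t}\tilde O(|B_r|dL)=\tilde O(ndL)$, with any $\log n$ factor coming from the round cap absorbed into $\tilde O(\cdot)$.

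The step I expect to be the crux is computing $B_{r+1}$ within the $\tilde O(|B_r|dL)$ time budget, since enumerating the $\binom{|B_r|}{2}$ pairwise distances in $S_r$ to see which are distorted is too expensive. This is where the proof of \cref{thm:ann} must be used as more than a black box: it should exhibit the event ``$i$ is not good'' as a locally checkable property of the randomized quadtree---intuitively, $i$ can lose a distance only to points sitting in the few subtrees that branch off $i$'s root-to-leaf path at a level more than about $\Lambda$ above the scale of the corresponding near neighbor---so that $B_{r+1}$, or an efficiently recognizable superset of it carrying the same $O(\delta)$ per-point probability, can be extracted by a single traversal of the quadtree. Granting this, the geometric decay of $|B_r|$ and the estimates above go through verbatim and yield the algorithm promised by \cref{thm:max_distortion}.
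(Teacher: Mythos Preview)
Your recursive scheme is exactly the paper's: repeatedly QuadSketch the surviving ``bad'' points, store all the trees plus a per-point pointer to the first tree in which the point became good, and answer a query $(i,j)$ from the earliest tree containing both. The paper also amplifies to high probability by $O(\log n)$ independent restarts, as you do.

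The piece you flag as the crux---making ``$i$ is good'' locally checkable so that $B_{r+1}$ can be extracted in near-linear time---is precisely where the paper opens the black box, and the answer is simpler than your guess. The paper does \emph{not} try to detect distortion; instead it uses the sufficient condition that drives the proof of \cref{thm:ann}: call $x_i$ \emph{padded} if, at every level $\ell$ of the randomly shifted grid, $x_i$ lies at distance more than $\rho(\ell)=8\epsilon^{-1}2^{\ell-\Lambda}\sqrt d$ from the boundary of its cell. Paddedness is a property of $x_i$ and the random shift alone, checkable in $O(dL)$ time per point by comparing each coordinate to its cell edges at each level. \Cref{lmm:padding} gives $\Pr[x_i\text{ not padded}]\le\delta$ for the stated $\Lambda$, and \cref{lmm:distances} shows that if $x_i$ is padded then \emph{all} distances from $x_i$ are preserved---so ``padded'' is exactly the efficiently recognizable superset of ``good'' you were looking for. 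With this in hand your geometric-decay and size arguments go through verbatim; the one small slip is that $\Exp|B_t|<1$ yields only constant probability of $B_t=\emptyset$, but either taking $t=C\log n$ large enough to drive $\Exp|B_t|$ below $1/\mathrm{poly}(n)$, or your $O(\log n)$-restart amplification, fixes this.
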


\Cref{thm:max_distortion} has smaller sketch size than that provided by the ``vanilla'' bound, and only slightly larger than that in~\cite{indyk2017near}.
For example, for $d=O(\epsilon^{-2}\log n)$ and $\Phi=\mathrm{poly}(n)$,
it improves over the ``vanilla'' bound by a factor of $O(\log n/\log\log n)$ 
and is lossier than the bound of~\cite{indyk2017near} by  a factor of $O(\log\log n)$.
However, compared to the latter, our construction time is nearly linear in $n$.
%The JL lemma can also run in time $\tilde O(n)$~\cite{ailon2009fast}.
%\footnote{Here we suppress $\epsilon$ when stating the running times.}
The comparison is summarized in~\Cref{tbl:sketches_related_work}.

We remark that~\cref{thm:max_distortion} does not let us recover an approximate embedding 
of the pointset, $\tilde x_1,\ldots,\tilde x_n$, as~\cref{thm:ann} does. Instead, the sketch
functions as an oracle that accepts queries of the form $(i,j)$ and return an approximation for
the distance $\norm{x_i-x_j}$.

\begin{table}[t]
\caption{Comparison of Euclidean metric sketches with maximum distortion $1\pm\epsilon$,
for $d=O(\epsilon^{-2}\log n)$ and $\log\Phi=O(\log n)$.}
\label{tbl:sketches_related_work}
\vskip 0.15in
\begin{center}
\begin{small}
\begin{tabular}{lll}
\hline
%\abovespace\belowspace
\textsc{Reference} & \textsc{Bits per point} & \textsc{Construction time} \\
\hline
%\abovespace
``Vanilla'' bound   & $O(\epsilon^{-2}\log^2n)$ & -- \rule{0pt}{3ex} \\
Algorithm of \cite{indyk2017near} &  $O(\epsilon^{-2}\log n\ \log(1/\epsilon))$ &  $\tilde O(n^{1+\alpha}+\epsilon^{-2}n)$ for $\alpha\in(0,1]$ \rule{0pt}{3ex} \\
%\belowspace
\Cref{thm:max_distortion}   & $O(\epsilon^{-2}\log n\ (\log\log n\ + \log(1/\epsilon)))$ &  $\tilde O(\epsilon^{-2}n)$ \rule{0pt}{3ex} \\
\hline
\end{tabular}
\end{small}
\end{center}
%\vskip -0.1in
\end{table}

\section{The Compression Scheme}
\label{s:compression}
The sketching algorithm takes as input the pointset $X$, and two parameters $L$ and $\Lambda$ that
control the amount of compression.

\paragraph{Step 1: Randomly shifted grid.}
The algorithm starts by imposing a randomly shifted axis-parallel grid on the points.
We first enclose the whole pointset in an axis-parallel hypercube $H$.
Let $\Delta'=\max_{i\in[n]}\norm{x_1-x_i}$, and $\Delta=2^{\lceil\log\Delta'\rceil}$.
Set up $H$ to be centered at $x_1$ with side length $4\Delta$.
Now choose $\sigma_1,\ldots,\sigma_d\in[-\Delta,\Delta]$ independently and uniformly
at random, and shift $H$ in each coordinate $j$ by $\sigma_j$. By the choice of side length
$4\Delta$, one can see that $H$ after the shift still contains the whole pointset.
For every integer $\ell$ such that $-\infty<\ell\leq\log(4\Delta)$, let $G_\ell$ denote the
axis-parallel grid with cell side $2^\ell$ which is aligned with $H$.

Note that this step can be often eliminated in practice without affecting the empirical performance of the algorithm, but it is necessary in order to achieve guarantees for {\em arbitrary} pointsets. 

\paragraph{Step 2: Quadtree construction.}
The $2^d$-ary quadtree on the nested grids $G_\ell$ is naturally defined by associating every
grid cell $c$ in $G_\ell$ with the tree node at level $\ell$, such that its children are the $2^d$ grid cells in $G_{\ell-1}$ which are contained in $c$. The edge connecting a node $v$ to a child $v'$ is labeled with a bitstring of length $d$ defined as follows:  the $j^{th}$ bit is $0$ if $v'$ coincides with the bottom half of $v$ along coordinate $j$, and $1$ if $v'$ coincides with the upper half along that coordinate.

In order to construct the tree, we start with $H$ as the root, and bucket the points contained in
it into the $2^d$ children cells. We only add child nodes for cells that contain at least one point of
$X$. Then we continue by recursing on the child nodes. The quadtree construction is finished after $L$ levels.
%Eventually, as the cell side length diminishes, each point becomes contained in its own grid cell
%(this must happen when the cell diameter becomes smaller than the minimum distance in $X$).
%Once a point has been in its own cell for $\Lambda$ levels, we stop partitioning that cell.
%When this happens for all $n$ points the construction is finished.
We denote the resulting edge-labeled tree by $T^*$.
A construction for $L=2$ is illustrated in~\Cref{fig:quadtree}.

\begin{figure}[ht]
\vskip 0.2in
\begin{center}
\centerline{\includegraphics[scale=0.6]{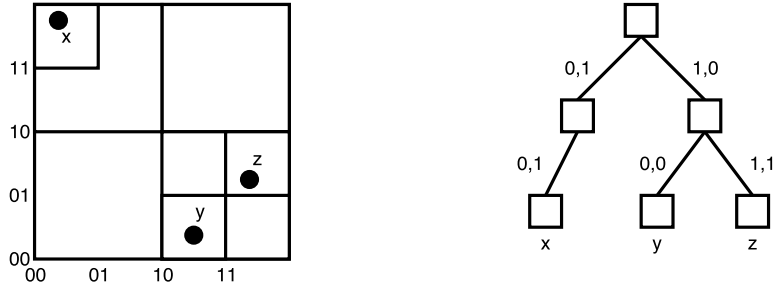}}
\caption{Quadtree construction for points $x,y,z$. The $x$ and $y$ coordinates are written as binary numbers.}
\label{fig:quadtree}
\end{center}
\vskip -0.2in
\end{figure} 

%\paragraph{Notation and terminology}
We define the~\emph{level} of a tree node with side length $2^\ell$ to be $\ell$ (note that $\ell$ can be negative).
%The term~\emph{bottom-left corner} of a grid cell will be used to describe that point whose each coordinate is minimized over the intersection of $\R^d$ with the cell (that is, the $d$-dimensional analogue of a bottom-left corner in the plane).
%PIOTR: this term is only used twice in the proof, I defined it there.
The~\emph{degree} of a node in $T^*$ is its number of children.
%The~\emph{cluster} associated with a node in $T^*$ is the points of $X$ contained in
%the corresponding grid cell. The nodes in each level in $T^*$ form a partition of $X$ into
%clusters. 
%PIOTR: we never use this term
Since all leaves are located at the bottom level, each point $x_i\in X$ is
contained in exactly one leaf, which we henceforth denote by $v_i$.

%In the end $T^*$ has $n$ leaves corresponding bijectively to the points in $X$.
%For every $i\in[n]$ we henceforth denote by $v_i$ the (unique) leaf that contains $x_i$.
%Note that by construction, we would have to climb at least $\Lambda$ levels upward from
%$v_i$ until reaching a cell that contains another point from $X$. 

\paragraph{Step 3: Pruning.}
Consider a downward path $u_0,u_1,\ldots,u_k$ in $T^*$,
such that $u_1,\ldots,u_{k-1}$ are nodes with degree $1$,
and $u_0,u_k$ are nodes with degree other than $1$ ($u_k$ may be a leaf).
For every such path in $T^*$, if $k>\Lambda+1$, we remove the nodes $u_{\Lambda+1},\ldots,u_{k-1}$
from $T^*$ with all their adjacent edges (and edge labels).
Instead we connect $u_k$ directly to $u_{\Lambda}$ as its child. 
We refer to that edge as the~\emph{long edge}, and 
 label it with the length of the path it replaces ($k-\Lambda$).
The original edges from $T^*$ are called~\emph{short edges}.
At the end of the pruning step, we denote the resulting tree by $T$.

\paragraph{The sketch.}
For each point $x_i\in X$ the sketch stores the index of the leaf $v_i$ that contains it.
In addition it stores the structure of the tree $T$, encoded using the Eulerian Tour Technique\footnote{See e.g., 
https://en.wikipedia.org/wiki/Euler\_tour\_technique.}. 
Specifically, starting at the root, we traverse $T$ in the Depth First Search (DFS) order. 
In each step, DFS either explores the child of the current node (downward step), or returns to the parent node (upward step). We
encode a downward step by $0$ and an upward step by $1$. With each downward step we also store
 the label of the traversed edge (a length-$d$ bitstring for a short edge or the edge length
for a long edge, and an additional bit marking if the edge is short or long).

\paragraph{Decompression.}
Recovering $\tilde x_i$ from the sketch is done simply by following the downward path
from the root of $T$ to the associated leaf $v_i$, collecting the edge labels of the short edges,
and placing zeros instead of the missing bits of the long edges. The collected bits then correspond
to the binary expansion of the coordinates of $\tilde x_i$.

More formally, for every node $u$ (not necessarily a leaf) we define $c(u)\in\R^d$ as follows:
For $j\in\{1,\ldots,d\}$, concatenate
the $j^{th}$ bit of every short edge label traversed along the downward path from the root to $u$.
When traversing a long edge labeled with length $k$, concatenate $k$ zeros.\footnote{This is the ``lossy'' step in our sketching method: the original bits could be arbitrary, but they are replaced with zeros.}
Then, place a binary floating point in the resulting bitstring, 
after the bit corresponding to level $0$.
(Recall that the levels in $T$ are defined by the grid cell side lengths, and $T$
might not have any nodes in level $0$; in this case we need to pad with $0$'s either on
the right or on the left until we have a $0$ bit in the location corresponding to level $0$.)
The resulting binary string is the binary expansion of the $j^{th}$ coordinate of $c(u)$.
Now $\tilde x_i$ is defined to be $c(v_i)$.

%\emph{Remark:} We have mentioned earlier that our scheme identifies non-important bits in the
%binary expansion of the point coordinates, that can be omitted without much effect on the distances.
%Indeed, as we now see, the bitstrings labeling the short edges correspond to the bits of those
%binary expansions (see also~\cref{obs:corners} in the next section),
%and those bits on edge that were lost in the pruning step (and replaced)
%by long edges) were the non-important ones.

\paragraph{Block QuadSketch.}We can further modify QuadSketch in a manner similar to Product Quantization~\cite{jegou2011product}. Specifically,  we partition the $d$ dimensions into $m$ blocks $B_1 \ldots B_m$ of size $d/m$  each, and apply QuadSketch separately to each block. More formally, for each $B_i$, we apply QuadSketch to the pointset $(x_1)_{B_i}  \ldots (x_n)_{B_i}$,  where $x_B$ denotes the $m/d$-dimensional vector obtained by projecting $x$ on the dimensions in $B$. 

The following statement is an immediate corollary of Theorem~\ref{thm:ann}.

%This idea is the  behind Product Quantization (PQ), a state-of-the-art metric compression algorithm. 

\begin{theorem}\label{thm:block-ann}
Given $\epsilon,\delta>0$, and $m$ dividing $d$,  set the pruning parameter $\Lambda$ to $O(\log(d\log\Phi/\epsilon\delta))$ and  the number of levels $L$ to $\log\Phi+\Lambda$.
The $m$-block variant of QuadSketch runs in time $\tilde O(ndL)$ and produces
a sketch of size $O(nd\Lambda+nm\log n)$ bits, with the following guarantee:
For every $i\in[n]$,
\[ \Pr\left[\forall_{j\in[n]}\norm{\tilde x_i-\tilde x_j} = (1\pm\epsilon)\norm{x_i-x_j} \right] \geq 1-m\delta. \]
\end{theorem}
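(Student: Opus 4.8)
The plan is to treat each of the $m$ blocks as a separate instance of single‑block QuadSketch, each run with its own independent random shift, to apply \cref{thm:ann} to each of them, and then to stitch the per‑block guarantees together by the Pythagorean identity and a union bound over the $m$ blocks. Concretely, fix a point index $i\in[n]$ and condition on the ``good'' event that within \emph{every} block $B_k$ the projected reconstruction preserves all distances from $(x_i)_{B_k}$ up to $1\pm\epsilon$. By \cref{thm:ann} applied to the $(d/m)$‑dimensional pointset $\{(x_1)_{B_k},\dots,(x_n)_{B_k}\}$, each block violates this with probability at most $\delta$, so a union bound bounds the total failure probability by $m\delta$. On the good event, for every $j\in[n]$,
\[ \norm{\tilde x_i-\tilde x_j}^2=\sum_{k=1}^m\norm{(\tilde x_i-\tilde x_j)_{B_k}}^2=\sum_{k=1}^m(1\pm\epsilon)^2\norm{(x_i-x_j)_{B_k}}^2=(1\pm\epsilon)^2\norm{x_i-x_j}^2 , \]
which gives the claimed bound after taking square roots. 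The resource bounds follow by summing the per‑block bounds of \cref{thm:ann}: time $\sum_{k=1}^m\tilde O(n(d/m)L)=\tilde O(ndL)$, and size $\sum_{k=1}^m O(n(d/m)\Lambda+n\log n)=O(nd\Lambda+nm\log n)$, the $nm\log n$ term reflecting that each of the $m$ trees stores one leaf index per point.

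The one step I expect to be the main obstacle is checking that the \emph{global} parameters $\Lambda=O(\log(d\log\Phi/\epsilon\delta))$ and $L=\log\Phi+\Lambda$ are still large enough when \cref{thm:ann} is invoked on an individual block. The dimension only shrinks ($d/m\le d$), which is harmless; the delicate point is the aspect ratio. Projecting $X$ onto $B_k$ can only shrink pairwise distances, so the block diameter — and hence the top scale $\Delta_k$ of the block quadtree — satisfies $\Delta_k\le\Delta$; but the \emph{minimum} pairwise distance can collapse under projection (two distinct points may even coincide inside $B_k$), so the block aspect ratio $\Phi_k$ need not be bounded by $\Phi$, and $L$ levels may be too few to separate every pair inside the block.

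To close this gap I would argue that pairs which are close \emph{inside} $B_k$ need not be resolved. Since the block quadtree has $L=\log\Phi+\Lambda$ levels starting from scale $\Delta_k\le\Delta$, its finest cells have side at most $\Delta/2^{\log\Phi+\Lambda}=2^{-\Lambda}(\Delta/\Phi)=O(2^{-\Lambda})\cdot\min_{i'<j'}\norm{x_{i'}-x_{j'}}$. A pair $i,j$ with $\norm{(x_i-x_j)_{B_k}}$ below this threshold lands in a common block leaf, so its reconstructed contribution in that block is exactly $0$ and the squared error incurred is at most $(d/m)\cdot O(2^{-2\Lambda})\cdot\min_{i'<j'}\norm{x_{i'}-x_{j'}}^2$; for all remaining pairs \cref{thm:ann} applies verbatim, since the aspect ratio over the scales that are actually resolved is at most $\Phi$. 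Summing the degenerate‑pair error over the at most $m$ blocks yields a total squared error of at most $d\cdot O(2^{-2\Lambda})\cdot\norm{x_i-x_j}^2$, which by the choice $\Lambda=\Omega(\log(\sqrt d/\epsilon))$ is at most $(\epsilon\norm{x_i-x_j})^2$ and is therefore absorbed into the $1\pm\epsilon$ slack after rescaling $\epsilon$ by a constant. With this observation, the combining step above goes through and yields \cref{thm:block-ann}.
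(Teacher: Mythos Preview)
Your approach is exactly what the paper does: it declares \cref{thm:block-ann} an ``immediate corollary'' of \cref{thm:ann} and gives no further argument, and your first paragraph spells out precisely that corollary --- apply \cref{thm:ann} independently in each block, union-bound the $m$ failure events, combine per-block distances by Pythagoras, and sum the per-block time and space bounds.

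You actually go further than the paper in flagging the block aspect-ratio issue, which the paper simply ignores. Your resolution is right in spirit but the case split is slightly misphrased: the relevant dichotomy is not a distance threshold on $\norm{(x_i-x_j)_{B_k}}$ (close projected points can still straddle a cell boundary) but whether $(x_i)_{B_k}$ and $(x_j)_{B_k}$ share a bottom-level leaf. If they do, the block contributes zero to the reconstructed distance and at most the leaf diameter to the true one, which is absorbed exactly as you argue. If they do not, the proof of \cref{lmm:distances} goes through, but not quite ``verbatim'': the node $u_i$ in that proof is either the head of a long edge --- in which case the pruning rule alone forces $\ell(u_i)\le\ell(w)-\Lambda$, with no appeal to any aspect ratio --- or else $u_i=v_i$ is a bottom-level leaf, whose side is at most $O(2^{-\Lambda})\cdot\min_{i'<j'}\norm{x_{i'}-x_{j'}}$ by your own computation, so the error term $2\cdot2^{\ell(u_i)}\sqrt{d/m}$ is already at most $\epsilon\norm{x_i-x_j}$. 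With that adjustment your argument closes.
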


It can be seen that increasing the number of blocks $m$ up to a certain threshold ( $d \Lambda/\log n$ ) does not affect the asymptotic bound on the sketch size. Although we cannot prove that varying $m$ allows to  {\em improve} the accuracy of the sketch, this seems to be the case empirically, as demonstrated in the experimental section.

\section{Experiments}

We evaluate QuadSketch experimentally and compare its performance
to Product Quantization (PQ)~\cite{jegou2011product}, a state-of-the-art compression
scheme for approximate nearest neighbors,
and to a baseline of uniform scalar quantization, which we refer to as Grid.
For each dimension of the dataset, Grid places $k$ equally spaced
landmark scalars on the interval between the minimum and the maximum values along that dimension,
and rounds each coordinate to the nearest landmark.

All three algorithms work by partitioning the data dimensions into blocks,
and performing a quantization step in each block independently of the other ones.
QuadSketch and PQ take the number of blocks as a parameter, and Grid uses blocks of size $1$.
The quantization step is the basic algorithm described in~\Cref{s:compression} for QuadSketch,
$k$-means for PQ, and uniform scalar quantization for Grid.
%(*** Remove the last paragraph?) PIOTR: No, it is good.

We test the algorithms on four datasets:
The SIFT data used in~\cite{jegou2011product},
MNIST~\cite{lecun1998mnist} (with all vectors normalized to $1$), 
NYC Taxi ridership data~\cite{guha2016robust}, 
and a synthetic dataset called Diagonal,
consisting of random points on a line embedded in a high-dimensional space.
The properties of the datasets are summarized in~\Cref{tbl:datasets}.
Note that we were not able to compute the exact diameters for MNIST and SIFT, hence we only report estimates for $\Phi$ for these data sets, obtained via random sampling. 

The Diagonal dataset consists of $10,000$ points of the form $(x,x,\ldots,x)$,
where $x$ is chosen independently and uniformly at random from the interval $[0..40000]$.
This yields a dataset with a very large aspect ratio $\Phi$,
and on which partitioning into blocks is not expected to be beneficial since all coordinates are maximally correlated. 

For SIFT and MNIST we use the standard query set provided with each dataset.
For Taxi and Diagonal we use $500$ queries chosen at random from each dataset.
For the sake of consistency, for all data sets,  we apply the same quantization process jointly to both the point set and the query set, for both PQ and QS. We note, however, that both algorithms can be run on ``out of sample'' queries.

\begin{table}[t]
\caption{Datasets used in our empirical evaluation. The aspect ratio of SIFT and MNIST is estimated on a random sample.}
\label{tbl:datasets}
\vskip 0.15in
\begin{center}
\begin{small}
\begin{tabular}{lccc}
\hline
Dataset & Points & Dimension & Aspect ratio ($\Phi$)  \\
\hline
SIFT & $1,000,000$ & $128$ & $\geq 83.2$ \rule{0pt}{2ex} \\
MNIST & $60,000$ & $784$ & $\geq 9.2$ \rule{0pt}{2ex} \\
NYC Taxi & $8,874$ & $48$ & $49.5$ \rule{0pt}{2ex} \\
Diagonal (synthetic) & $10,000$ & $128$ & $20,478,740.2$ \rule{0pt}{2ex} \\
\hline
\end{tabular}
\end{small}
\end{center}
%\vskip -0.1in
\end{table}

For each dataset, we enumerate the number of blocks over all divisors of the dimension $d$. 
For QuadSketch, $L$ ranges in $2,\ldots,20$, and $\Lambda$ ranges in $1,\ldots,L-1$.
For PQ, the number of $k$-means landmarks per block ranges in $2^5,2^6,\ldots,2^{12}$.
For both algorithms we include the results for all combinations of the parameters, and plot the envelope of the best performing combinations.

We report two measures of performance for each dataset: (a) the {\em accuracy}, defined as the fraction of queries for which the sketch returns the true nearest neighbor,
and (b) the {\em average distortion}, defined as the ratio between the (true) distances from the query
to the reported near neighbor and to the true nearest neighbor. 
The sketch size is measured in bits per coordinate.
The results appear in~\Cref{fig:sift,fig:mnist,fig:taxi,fig:diagonal}.
Note that the vertical coordinate in the distortion plots corresponds to the value of $\epsilon$, not $1+\epsilon$.

For SIFT, we also include a comparison with Cartesian k-Means (CKM)~\cite{norouzi2013cartesian}, in~\Cref{fig:ckm}.

\begin{figure}[p]
\begin{center}
\includegraphics[width=0.45\textwidth]{sift}%
\includegraphics[width=0.45\textwidth]{distortion}
\caption{Results for the SIFT dataset.}
\label{fig:sift}
\end{center}

\begin{center}
\includegraphics[width=0.45\textwidth]{mnist}%
\includegraphics[width=0.45\textwidth]{mnist_distortion}
\caption{Results for the MNIST dataset.}
\label{fig:mnist}
\end{center}

\begin{center}
\includegraphics[width=0.45\textwidth]{taxi_accuracy}%
\includegraphics[width=0.45\textwidth]{taxi_distortion}
\caption{Results for the Taxi dataset.}
\label{fig:taxi}
\end{center}

\begin{center}
\includegraphics[width=0.45\textwidth]{diag_accuracy}%
\includegraphics[width=0.45\textwidth]{diag_distortion}
\caption{Results for the Diagonal dataset.}
\label{fig:diagonal}
\end{center}
\end{figure}

\begin{figure}[h]
%\vskip 0.2in
\begin{center}
\includegraphics[width=0.45\textwidth]{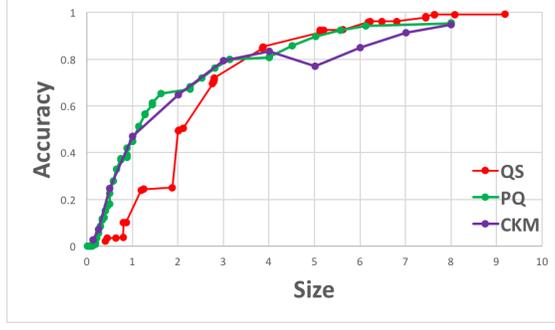}
\caption{Additional results for the SIFT dataset.}
\label{fig:ckm}
\end{center}
%\vskip -0.2in
\end{figure}

\subsection{QuadSketch Parameter Setting}
We plot how the different parameters of QuadSketch effect its performance.
Recall that $L$ determines the number of levels in the quadtree prior to the pruning step,
and $\Lambda$ controls the amount of pruning.
By construction, the higher we set these parameters, the larger the sketch will be and with better accuracy.
The empirical tradeoff for the SIFT dataset is plotted in~\Cref{fig:depth_lambda}.

\begin{figure}[h]
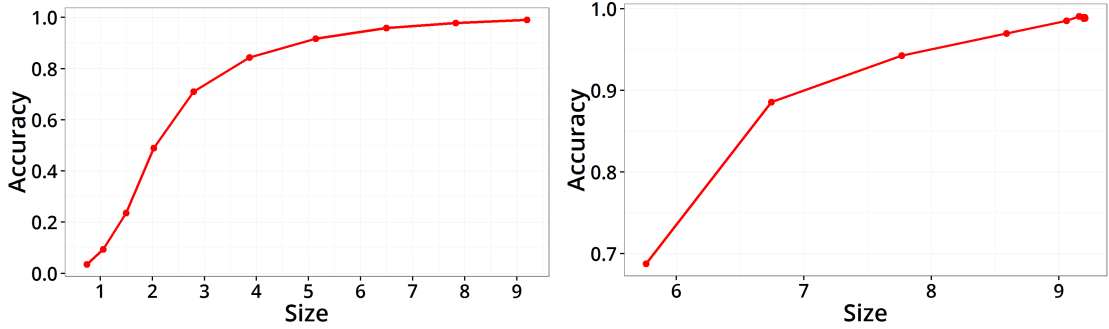

\vskip 0.2in
\begin{center}
\includegraphics[width=0.45\textwidth]{depth}%
\includegraphics[width=0.45\textwidth]{lambda}
\caption{On the left, $L$ varies from $2$ to $11$ for a fixed setting of $16$ blocks and $\Lambda=L-1$ (no pruning).
On the right, $\Lambda$ varies from $1$ to $9$ for a fixed setting of $16$ blocks and $L=10$. Increasing $\Lambda$ beyond $6$ does not have further effect on the resulting sketch.}

\label{fig:depth_lambda}
\end{center}
\vskip -0.2in
\end{figure}

The optimal setting for the number of blocks is not monotone,
and generally depends on the specific dataset.
It was noted in~\cite{jegou2011product} that on SIFT data an intermediate
number of blocks gives the best results, and this is confirmed by our experiments.
\Cref{tbl:blocks} lists the performance on the SIFT dataset for a varying number of blocks,
for a fixed setting of $L=6$ and $\Lambda=5$.
It shows that the sketch quality remains essentially the same, while the size varies significantly,
with the optimal size attained at $16$ blocks.

\begin{table}[h]
\vskip 0.15in
\begin{center}
\begin{small}
\begin{tabular}{cccc}
\hline
\# Blocks & Bits per coordinate & Accuracy & Average distortion  \\
\hline
$1$ & $5.17$ & $0.719$ & $1.0077$ \rule{0pt}{2ex} \\
$2$ & $4.523$ & $0.717$ & $1.0076$ \rule{0pt}{2ex} \\
$4$ & $4.02$ & $0.722$ & $1.0079$ \rule{0pt}{2ex} \\
$8$ & $3.272$ & $0.712$ & $1.0079$ \rule{0pt}{2ex} \\
$\mathbf{16}$ & $\mathbf{2.795}$ & $0.712$  & $1.008$ \rule{0pt}{2ex} \\
$32$ & $3.474$ & $0.712$  & $1.0082$ \rule{0pt}{2ex} \\
$64$ & $4.032$ & $0.713$  & $1.0081$ \rule{0pt}{2ex} \\
$128$ & $4.079$ & $0.72$  & $1.0078$ \rule{0pt}{2ex} \\
\hline
\end{tabular}
\end{small}
\end{center}
%\vskip -0.1in
\caption{QuadSketch accuracy on SIFT data by number of blocks, with $L=6$ and $\Lambda=5$.}
\label{tbl:blocks}
\end{table}

\section{Proofs}
\label{s:proofs}
In this section we prove~\cref{thm:ann,thm:max_distortion}.
Recall that we have a pointset $x_1,\ldots,x_n\in\R^d$ with aspect ratio $\Phi$,
and given error parameters $\epsilon,\delta>0$.
For the remainder of the section we fix the setting
\[ \Lambda = \log(16d^{1.5}\log\Phi/(\epsilon\delta)). \]

Recall that in~\Cref{s:compression} we let $G_\ell$ denote the grid with side length $2^\ell$ for every integer $\ell$.
Our analysis is based on the observation that randomly shifting the grids,
which is used as the first step of our algorithm,
induces a \emph{padded decomposition}~\cite{bartal1996probabilistic} of the pointset.
We now define this formally.

\begin{definition}[padded point]\label{def:padded}
We say that a point $x_i$ is~\emph{$(\epsilon,\Lambda,\ell)$-padded}, if
the grid cell in $G_{\ell}$ that contains $x_i$
also contains the ball of radius $\rho(\ell)$ centered at $x_i$,
where
\[ \rho(\ell) = 8\epsilon^{-1}2^{\ell-\Lambda}\sqrt{d} . \]
We say that $x_i$ is~\emph{$(\epsilon,\Lambda)$-padded} in the quadtree $T$,
if it is $(\epsilon,\Lambda,\ell)$-padded for every level $\ell$ of $T$.
\end{definition}

Note that $d$ and $\Lambda$ are fixed parameters for a given input.
We omit their dependence from the notation $\rho(\ell)$ for simplicity.

We now prove~\cref{thm:ann}.
It follows directly from combining the following two lemmas.
\begin{lemma}\label{lmm:padding}
If the grids are randomly shifted, as in~\Cref{s:compression},
then every point $x_i$ is $(\epsilon,\Lambda)$-padded in $T$ with probability $1-\delta$.
\end{lemma}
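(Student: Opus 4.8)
The plan is to analyze a single level $\ell$ and a single point $x_i$, bound the probability that $x_i$ fails to be $(\epsilon,\Lambda,\ell)$-padded, and then apply a union bound over all $L = \log\Phi + \Lambda$ levels (and observe this is what is needed to conclude $x_i$ is $(\epsilon,\Lambda)$-padded in $T$). Fix $\ell$. The grid $G_\ell$ is axis-parallel with cell side $2^\ell$, and it is shifted by the random offsets $\sigma_1,\dots,\sigma_d$; crucially, in each coordinate $j$ the shift modulo $2^\ell$ is (close to) uniform on an interval of length $2^\ell$, and the shifts across coordinates are independent. For $x_i$ to contain the ball of radius $\rho(\ell)$ around it inside its cell, it suffices that in \emph{every} coordinate $j$, the value of the $j^{th}$ coordinate of $x_i$ (relative to the shifted grid) is at distance at least $\rho(\ell)$ from both walls of the cell it lies in along coordinate $j$ — equivalently, the ball of radius $\rho(\ell)$ is contained in the axis-parallel cube of side $2^\ell$, and containment in the cube coordinate-wise is sufficient (actually a cube of half-side $\rho(\ell)$ suffices, and a ball of radius $\rho(\ell)$ fits in such a cube).

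First I would make the coordinate-wise reduction precise: $x_i$ is $(\epsilon,\Lambda,\ell)$-padded whenever, for each $j\in[d]$, the distance from the $j^{th}$ coordinate of $x_i$ to the nearest grid hyperplane of $G_\ell$ orthogonal to coordinate $j$ is at least $\rho(\ell)$. For a single coordinate, the distance to the nearest such hyperplane is, by the uniform random shift, distributed as the distance from a uniform point in $[0, 2^\ell)$ to $\{0, 2^\ell\}$, so the probability it is less than $\rho(\ell)$ is at most $2\rho(\ell)/2^\ell$. (I should double-check that the shift $\sigma_j \in [-\Delta,\Delta]$ indeed induces an exactly- or near-uniform offset modulo $2^\ell$ for every relevant $\ell \le \log(4\Delta)$; since $\Delta$ is a power of two and $2\Delta$ is an integer multiple of $2^\ell$, the reduction of a uniform variable on $[-\Delta,\Delta]$ modulo $2^\ell$ is exactly uniform on $[0,2^\ell)$.) By a union bound over the $d$ coordinates, $x_i$ fails to be $(\epsilon,\Lambda,\ell)$-padded with probability at most $2d\rho(\ell)/2^\ell = 2d \cdot 8\epsilon^{-1} 2^{\ell-\Lambda}\sqrt d / 2^\ell = 16 d^{1.5} \epsilon^{-1} 2^{-\Lambda}$.

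Next I would union-bound over the $L = \log\Phi + \Lambda$ levels of $T$, giving failure probability at most $16 d^{1.5} \epsilon^{-1} L \cdot 2^{-\Lambda} \le 16 d^{1.5} \epsilon^{-1} (\log\Phi + \Lambda) 2^{-\Lambda}$. Plugging in the chosen value $\Lambda = \log(16 d^{1.5}\log\Phi/(\epsilon\delta))$, the factor $2^{-\Lambda} = \epsilon\delta/(16 d^{1.5}\log\Phi)$ cancels the leading terms and leaves something like $\delta \cdot (\log\Phi + \Lambda)/\log\Phi$, which is $O(\delta)$; a slightly more generous choice of the constant inside $\Lambda$ (absorbed in the $O(\cdot)$ of the theorem statement, where $\Lambda = O(\log(d\log\Phi/\epsilon\delta))$) makes this at most $\delta$. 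The main obstacle — really the only subtle point — is getting the probability-per-coordinate-per-level estimate exactly right, i.e.\ confirming that the random shift produces a genuinely uniform grid offset at every scale $2^\ell$ that appears in the tree, and that the radius $\rho(\ell)$ is the right quantity to demand padding for (this $\rho(\ell)$ is presumably exactly what the companion lemma — the one bounding distance distortion given padding — needs, so the two constants must be chosen consistently). Everything else is a routine union bound and substitution.
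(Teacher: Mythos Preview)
Your proposal is correct and follows essentially the same argument as the paper: bound the per-coordinate, per-level failure probability by $2\rho(\ell)/2^\ell$ using uniformity of the random shift, then union-bound over the $d$ coordinates and over the levels. The only cosmetic difference is that the paper union-bounds over $\log\Phi$ levels rather than $L=\log\Phi+\Lambda$, which is why its arithmetic lands exactly on $\delta$ with the stated value of $\Lambda$ instead of the $O(\delta)$ you obtain.
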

\begin{proof}
Fix a point $x_i$, a coordinate $k\in\{1,\ldots,d\}$
and a level $\ell$.
Let $x_i(k)$ denote the value of $x_i$ in coordinate $k$.
Along this coordinate, we are randomly shifting a $1$-dimensional grid
partitioned into intervals of length $2^\ell$.
Since the shift is uniformly random,
the probability for $x_i(k)$ to be at distance at most
$\rho(\ell)$ from an endpoint of the interval that contains it equals
$2\rho(\ell)/2^\ell$. By plugging our setting of $\rho(\ell)$ and $\Lambda$,
this probability equals $\delta/(d\log\Phi)$.
Taking a union bound over the $d$ coordinates, we have probability at most $\delta/\log\Phi$
for $x_i$ to be at distance at most $\rho(\ell)$ from the boundary of the cell of $G_\ell$ that contains it.
In the complement event $x_i$ is $(\epsilon,\Lambda,\ell)$-padded in $G_\ell$.
Taking another union bound over the $\log\Phi$ levels in the quadtree,
$x_i$ is $(\epsilon,\Lambda)$-padded with probability at least $1-\delta$.
\end{proof}

\begin{lemma}\label{lmm:distances}
If a point $x_i$ is $(\epsilon,\Lambda)$-padded in $T$, then for every $j\in[n]$,
\[
  (1-\epsilon)\norm{\tilde x_i - \tilde x_j} \leq
  \norm{x_i-x_j} \leq
  (1+\epsilon)\norm{\tilde x_i - \tilde x_j},
\]
where $\{\tilde x_i\}$ are as defined in~\Cref{s:compression}.
\end{lemma}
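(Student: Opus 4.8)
The plan is to bound the error introduced in each coordinate of each point by the lossy step (replacing long-edge bits with zeros and truncating the quadtree at level $-L+\log(4\Delta)$, i.e.\ after $L$ levels), and then to show that the padding hypothesis controls this error relative to $\norm{x_i-x_j}$. First I would set up notation: for a point $x_i$, let $v_i$ be its leaf in $T$, and compare the true coordinate $x_i(k)$ with $\tilde x_i(k) = c(v_i)(k)$. The bits of $\tilde x_i(k)$ agree with those of $x_i(k)$ at every level $\ell$ for which the downward path to $v_i$ passes through a \emph{short} edge at level $\ell$; they are forced to $0$ at levels spanned by a long edge, and all levels below the bottom of the tree (levels $<-L+\log(4\Delta)$, roughly) are simply absent (equivalently $0$). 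So $|x_i(k)-\tilde x_i(k)|$ is at most the sum of $2^\ell$ over the ``missing'' levels $\ell$. The key point is that long edges only occur along paths of degree-$1$ nodes, i.e.\ stretches where the cell containing $x_i$ in $G_\ell$ is the \emph{only} occupied cell; and the truncation at the bottom is handled by the overall range being $\Phi$.

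Next I would use the padding hypothesis to argue that, for the pair $(i,j)$, the ``resolution'' at which we can afford to be sloppy is governed by $\norm{x_i-x_j}$. Let $\ell^*$ be the level such that $2^{\ell^*}$ is comparable to $\norm{x_i-x_j}$ (say $2^{\ell^*-1} < \norm{x_i-x_j} \le 2^{\ell^*}$). At level $\ell^*$ (or one or two levels above), $x_i$ and $x_j$ lie in the same or in neighboring cells. If $x_i$ is $(\epsilon,\Lambda,\ell^*)$-padded, then the ball of radius $\rho(\ell^*)=8\epsilon^{-1}2^{\ell^*-\Lambda}\sqrt d$ around $x_i$ is inside $x_i$'s cell at level $\ell^*$; by choice of $\Lambda$ this radius already exceeds $\norm{x_i-x_j}$ when... actually $\rho(\ell^*)$ can be \emph{smaller} than $\norm{x_i-x_j}$, so the right reading is the reverse: padding at all levels $\ell \le \ell^* + O(\log(\sqrt d/\epsilon))$ down to $\ell^* - \Lambda$ or so is what we need. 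The cleanest route: near the level of $\norm{x_i-x_j}$, the quadtree is still branching (nodes have been separated), so long edges can only occur \emph{below} level $\approx \ell^* - \Lambda$; and at those deep levels the accumulated truncation error is at most $\sum_{\ell\le \ell^*-\Lambda} 2^\ell \cdot \sqrt d = O(2^{\ell^*-\Lambda}\sqrt d)$ per point in $\ell_2$, which by the choice $\Lambda = \log(16 d^{1.5}\log\Phi/(\epsilon\delta))$ is at most $O(\epsilon\,\norm{x_i-x_j}/\sqrt d \cdot \sqrt d)$... and summing this over coordinates and accounting for both endpoints gives total error $O(\epsilon\norm{x_i-x_j})$. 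So the heart of the argument is: (a) show long/truncated edges on the path to $v_i$ only occur at levels $\le \ell_{ij} := \ell^* - \Lambda + O(1)$, using that $x_i$ and $x_j$ got separated into different occupied cells at some level $> \ell_{ij}$ so neither path is on a degree-$1$ stretch there; and (b) bound the per-coordinate truncation error below level $\ell_{ij}$ by a geometric series summing to $O(2^{\ell_{ij}}) = O(\epsilon 2^{\ell^*}/(d^{1.5}\log\Phi) \cdot \dots)$, then apply the triangle inequality $\bigl|\,\norm{\tilde x_i-\tilde x_j}-\norm{x_i-x_j}\,\bigr| \le \norm{x_i-\tilde x_i} + \norm{x_j-\tilde x_j}$.

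One subtlety I would be careful about: the truncation at the \emph{bottom} of the tree (the $L$-level cutoff, with $L=\log\Phi+\Lambda$) must not introduce error comparable to the smallest pairwise distance either. Here the point is that the minimum distance is $\ge \Delta/\Phi \ge 2^{\log(4\Delta)-L-\Lambda}\cdot(\text{const})$ by definition of $\Phi$ and of $\Delta$, so $L=\log\Phi+\Lambda$ levels push the cutoff a full $\Lambda$ levels below the scale of even the closest pair — exactly the same geometric-series bound as above applies. So the bottom truncation is just the ``long edge'' analysis with the cutoff level playing the role of the bottom of a long edge, and no separate argument is needed beyond checking this level bookkeeping.

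The main obstacle I expect is step (a): cleanly proving that on the root-to-$v_i$ path, every long edge (and the final truncation) sits at a level strictly below $\ell^* - \Lambda + O(1)$. This requires arguing that at level $\ell^* - \Lambda + O(1)$ the cell containing $x_i$ also contains \emph{some other point} of $X$ — otherwise it could be in the middle of a degree-$1$ chain that gets pruned into a long edge reaching above level $\ell^*-\Lambda$. The natural candidate other point is $x_j$ itself: I must show $x_i$ and $x_j$ are in the same level-$\ell$ cell for the relevant $\ell$ just above $\ell^*-\Lambda$ (so the chain isn't degree-$1$ there... wait, that's the wrong direction — if they're in the \emph{same} cell the node could still be degree $1$). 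The correct formulation uses padding to ensure that for $\ell$ in the window $[\ell^*-\Lambda, \ell^*]$, the cell of $x_i$ at level $\ell$ either contains $x_j$ (when $\ell$ is large, near $\ell^*$) forcing nontrivial structure one level down, \emph{or} the bits of $x_i$ and $x_j$ already differ at a level within the window so $c(\cdot)$ faithfully records the separating bit. Padding of $x_i$ at level $\ell$ guarantees $x_i$ is far from the cell boundary, hence so is $x_j$ if $\norm{x_i-x_j}$ is small compared to $2^\ell$, which lets me locate the exact level where they split; combined with the fact that edges strictly above that split level on $x_i$'s path are short (they lie on a branching portion since $x_i$ and $x_j$ coexist there), this pins down where long edges can start. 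Getting the constants in the $O(\log(\sqrt d/\epsilon))$ window and in $\Lambda$ to line up is the fiddly part, but the $16 d^{1.5}\log\Phi/(\epsilon\delta)$ inside $\Lambda$ is clearly engineered precisely to absorb the $\sqrt d$ from $\ell_2$ norms of $d$-dimensional $0/1$ displacement vectors, another $\sqrt d$ from converting coordinate-wise to Euclidean, the $\log\Phi$ from the union bound in Lemma~\ref{lmm:padding}, and the $\epsilon,\delta$ targets — so I expect the bookkeeping to close.
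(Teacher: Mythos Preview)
Your overall strategy---bound the displacement introduced by long edges and truncation, then compare to $\norm{x_i-x_j}$ via padding---is the right one, and your handling of the bottom cutoff and of the levels \emph{below} the split of $x_i$ and $x_j$ is essentially what the paper does. But there is a genuine gap in step~(a).

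You assert that on the root-to-$v_i$ path, long edges can only occur below level $\approx \ell^*-\Lambda$, because ``edges strictly above that split level on $x_i$'s path are short (they lie on a branching portion since $x_i$ and $x_j$ coexist there).'' This is false: the fact that $x_i$ and $x_j$ coexist in a cell does \emph{not} make that node branching. If the entire pointset lies in a tiny cluster, the path from the root down to the first branching node is a long degree-$1$ chain, which after pruning becomes a long edge sitting \emph{above} the lowest common ancestor $w$ of $v_i,v_j$. In that case $\norm{x_i-\tilde x_i}$ and $\norm{x_j-\tilde x_j}$ can each be of order $2^{\ell(\text{root})-\Lambda}\sqrt d$, which is completely unrelated to (and possibly vastly larger than) $\norm{x_i-x_j}$. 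So your final triangle inequality $\bigl|\norm{\tilde x_i-\tilde x_j}-\norm{x_i-x_j}\bigr|\le\norm{x_i-\tilde x_i}+\norm{x_j-\tilde x_j}$ cannot close.

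The missing idea is that long edges \emph{above} $w$ are traversed identically on the way to both $v_i$ and $v_j$, so they contribute the \emph{same} additive shift $\eta$ to $c(v_i)$ and $c(v_j)$, and hence cancel in the difference $\tilde x_i-\tilde x_j$. The paper's proof makes this explicit: it defines $u_i$ (resp.\ $u_j$) as the lowest node on the $w$-to-$v_i$ (resp.\ $v_j$) path reachable by short edges only, observes that $\ell(u_i),\ell(u_j)\le\ell(w)-\Lambda$ by the pruning rule, and then bounds $\norm{(x_i-\eta)-\tilde x_i}$ and $\norm{(x_j-\eta)-\tilde x_j}$ rather than $\norm{x_i-\tilde x_i}$ and $\norm{x_j-\tilde x_j}$. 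Once you insert this common-shift trick, your argument goes through; without it, it does not. Incidentally, the paper also uses padding only at the single level $\ell(w)-1$ (to lower-bound $\norm{x_i-x_j}$ by $\rho(\ell(w)-1)$), rather than across a window of levels as you suggest---working with the LCA level directly avoids the back-and-forth between $\ell^*$ and the split level that you found fiddly.
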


\begin{proof}
We recall that $T$ is a pruned quadtree in which every node $v$ is associated with a grid cell
of an axis-parallel grid $G_{\ell}$ with side length $2^{\ell}$, which is aligned with and contained in $H$.
We call $\ell$ the~\emph{level} of $v$, and denote it henceforth by $\ell(v)$.
We will use the term ``bottom-left corner'' of a grid cell for the corner that minimizes
all coordinate values (i.e.,~the high-dimensional analog of a bottom-left corner in the plane).

Let $r$ be the root of $T$. We may assume w.l.o.g.~that the bottom-left corner
of $H$ is the origin in $\R^d$, since translating $H$ together with the entire pointset
does not change pairwise distances.
Under this assumption, we make the following observation, illustrated in~\Cref{fig:corners}.

\begin{observation}\label{obs:corners}
Let $v$ be a node in $T$. If the path from $r$ to $v$ contains only short edges,
then $c(v)$ 
(defined by the decompression algorithm in~\Cref{s:compression})
is the bottom-left corner of the grid cell associated with $v$.
\end{observation}

\begin{figure}[ht]
\vskip 0.2in
\begin{center}
\centerline{\includegraphics[scale=0.6]{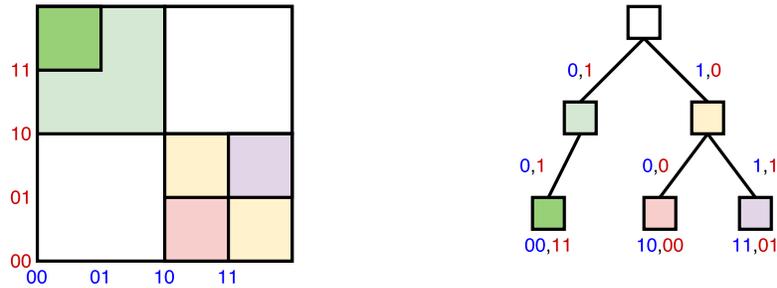}}
\caption{By collecting the edge label bits along every dimension from the root to a
node, and padding with zeros as necessary, we obtain the binary expansion of the
bottom-left corner of the associated grid cell.}
\label{fig:corners}
\end{center}
\vskip -0.2in
\end{figure} 

Let $x_i$ be a padded point, and $x_j$ be any point.
Recall that we denote by $v_i$ and $v_j$ the leaves corresponding to $x_i$ and $x_j$ respectively
(see~\Cref{s:compression}).
Let $w$ be the lowest common ancestor of $v_i$ and $v_j$ in $T$.
Since $x_i$ and $x_j$ are in separate grid cells of $G_{\ell(w)-1}$, and the cell
containing $x_i$ also contains the ball of radius $\rho(\ell(w)-1)$ around $x_i$,
we have
\begin{equation}\label{eq:separation}
  \norm{x_i-x_j} \geq \rho(\ell(w)-1) = 8\epsilon^{-1}2^{\ell(w)-1-\Lambda}\sqrt{d} .
\end{equation}

Let $u_i$ be the lowest node on the downward path from $w$ to $v_i$, that can be
reached without traversing a long edge.
Similarly define $u_j$ for $v_j$.
See~\Cref{fig:proof_picture} for illustration.

Note that $u_i$ must be either the leaf $v_i$, or an internal node whose
only outgoing edge is a long edge. In both cases, $u_i$ is the bottom of
a path of degree-$1$ nodes of length $\Lambda$:
\begin{itemize}
  \item If $u_i$ is a leaf: Since the pointset has aspect ratio $\Phi$, then after $\log\Phi$
  levels the grid becomes sufficiently fine such that each grid cell contains at most one point $x_i$.
  Since we generate the quadtree with $L=\log\Phi+\Lambda$ levels, then each point $x_i$ is in its
  own grid cell for at least the bottom $\Lambda$ levels of the quadtree.
  \item If $u_i$ is an internal node which is the head of a long edge: Since the pruning step only places
  long edges at the bottom of degree-$1$ paths of length $\Lambda$, then $u_i$ must be the bottom
  node of such path.
\end{itemize}

On the other hand $w$ is an ancestor of $u_i$, and it has degree at least $2$,
since it is also an ancestor of $u_j$. Hence $w$ is at least $\Lambda$ levels above $u_i$,
implying $\ell(v_i)\leq\ell(w)-\Lambda$.
Applying the same arguments to $u_j$ we get also $\ell(v_j)\leq\ell(w)-\Lambda$.

\begin{figure}[ht]
\vskip 0.2in
\begin{center}
\centerline{\includegraphics[width=0.4\textwidth]{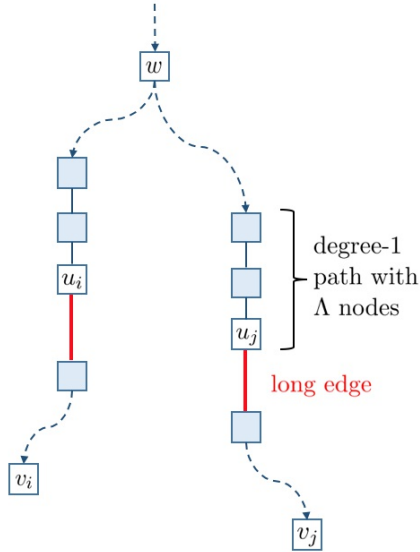}}
\caption{In the proof of~\cref{lmm:distances}, $w$ is the lowest common ancestor of $v_i,v_j$,
the leaves corresponding to $x_i,x_j$.
$u_i$ is the lowest node on the downward path from $w$ to $v_i$ which is achievable without
traversing any long edges (marked in red). $u_j$ is defined similarly for $v_j$.}
\label{fig:proof_picture}
\end{center}
\vskip -0.2in
\end{figure}

Let $c^*(u_i),c^*(u_j)\in\R^d$ be the bottom-left corners of the grid cells associated with $u_i$ and $u_j$.
If all edges on the downward paths from the root of $T$ to $u_i$ and $u_j$
were short, then~\cref{obs:corners} would yield that $c^*(u_i)=c(u_i)$ and  $c^*(u_j)=c(u_j)$.
In general, there might be some long edges on those paths, but they all must lie on the subpath
from the root of $T$ down to $w$, which is the same for both paths. This is because by the choice of
$u_i$ and $u_j$, all downward edges from $w$  to either of them are short.
Therefore $c(u_i)$ and $c(u_j)$
are shifted from the true bottom-left corners by the same shift, which we denote by
\[ \eta = c^*(u_i)-c(u_i) = c^*(u_j)-c(u_j) . \]

Next, observe that the grid cell associated with $u_i$ has
side $2^{\ell(u_i)}$ and it contains both $c^*(u_i)$ and $x_j$.
Therefore $\norm{x_i-c^*(u_i)} \leq 2^{\ell(u_i)}\sqrt{d}$.

Furthermore, since $u_i$ is an ancestor of $v_i$, then by the definition of $c(u_i)$ and $c(v_i)$,
in each coordinate, the binary expansions of these two vertices are equal from the location $\ell(u_i)$ and up.
In the less significant locations, $c(u_i)$ is zeroed while $c(v_i)$ may have arbitrary bits.
This means that the difference between $c(u_i)$ and $c(v_i)$ in each coordinate can be at most
$2^{\ell(u_i)}$ in the absolute value, and consequently $\norm{c(v_i)-c(u_i)} \leq 2^{\ell(u_i)}\sqrt{d}$.
Recalling that the decompression algorithm defines $\tilde x_i=c(v_i)$, we get
$\norm{\tilde x_i-c(u_i)} \leq 2^{\ell(u_i)}\sqrt{d}$.

Collecting the above inequalities, we have
\begin{align*}
  \norm{x_i-\eta-\tilde x_i} &= \norm{x_i-c(u_i)-\eta+c(u_i)-\tilde x_i} \\
  &=  \norm{x_i-c^*(u_i)+c(u_i)-\tilde x_i} \\
  &\leq \norm{x_i-c^*(u_i)} + \norm{c(u_i)-\tilde x_i} \\
  &\leq 2\cdot2^{\ell(u_i)}\sqrt{d} \\
  &\leq 2\cdot2^{\ell(w)-\Lambda}\sqrt{d}.
\end{align*}
Similarly for $j$ we have $\norm{x_i-\eta-\tilde x_i} \leq 2\cdot2^{\ell(w)-\Lambda}\sqrt{d}$.
Together, by the triangle inequality,
\begin{align*}
  \norm{\tilde x_i - \tilde x_j} &= \norm{\tilde x_i +\eta - x_i + x_i - x_j + x_j -\eta - \tilde x_j} \\
  &= \norm{x_i-x_j} \pm \left( \norm{x_i-\eta-\tilde x_i} + \norm{x_i-\eta-\tilde x_i} \right) \\
  &= \norm{x_i-x_j} \pm 4\cdot2^{\ell(w)-\Lambda}\sqrt{d}. \\
\end{align*}
To complete the proof of~\cref{lmm:distances} it remains to show
$4\cdot2^{\ell(w)-\Lambda}\sqrt{d} \leq \epsilon\cdot\norm{x_i-x_j}$,
which follows from \Cref{eq:separation}.
\end{proof}

\subsection{Sketch Size and Running Time}
\begin{lemma}\label{lmm:sketch_size}
QuadSketch produces a sketch of size $O(nd\Lambda+n\log n)$ bits.
\end{lemma}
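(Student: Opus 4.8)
The plan is to bound the sketch size by treating separately its three ingredients: the $n$ stored leaf indices, the Euler-tour encoding of the structure of $T$, and the edge labels. The first ingredient is immediate: since every leaf of $T$ corresponds to a grid cell containing at least one point of $X$ and distinct leaf cells are disjoint, $T$ has at most $n$ leaves, so storing one leaf index per point costs $O(n\log n)$ bits. For the other two ingredients the crux is to show that $T$ has only $O(n\Lambda)$ nodes, and hence $O(n\Lambda)$ edges: given this, the Euler tour writes two bits per edge, i.e.\ $O(n\Lambda)$ bits, and the edge labels can be charged as below.

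To count the nodes of $T$, I would split them into leaves, branching nodes (those with at least two children), and degree-$1$ nodes. There are at most $n$ leaves as above, and in any rooted tree the number of branching nodes is at most the number of leaves minus one, hence at most $n-1$ (suppress the degree-$1$ nodes and count). For the degree-$1$ nodes, recall that the pruning step partitions them into maximal downward chains $u_1,\dots,u_{k-1}$ delimited by nodes $u_0,u_k$ of degree $\ne 1$, and that after pruning each such chain retains at most $\Lambda$ nodes. Each chain is determined by its bottom delimiter $u_k$, which is a branching node or a leaf, and at most one further chain descends from the root; so there are at most $2n$ chains and at most $2n\Lambda = O(n\Lambda)$ degree-$1$ nodes. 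Altogether $T$ has $O(n\Lambda)$ nodes and $O(n\Lambda)$ edges.

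It remains to bound the total label length over all edges. Each edge carries one marker bit (short/long), contributing $O(n\Lambda)$ bits. Each short edge additionally carries a $d$-bit string, and since there are at most $O(n\Lambda)$ edges this contributes $O(nd\Lambda)$ bits. Each long edge additionally carries the length of the chain it replaced, an integer in $[1,L]$ with $L=\log\Phi+\Lambda$, hence $O(\log L)$ bits; from $\Lambda = \log(16d^{1.5}\log\Phi/(\epsilon\delta))$ one gets $\log\Phi\le 2^{\Lambda}$ and $\Lambda\le 2^{\Lambda}$, so $L\le 2^{\Lambda+1}$ and $\log L=O(\Lambda)$, and since there is at most one long edge per pruned chain there are $O(n)$ of them, contributing $O(n\Lambda)$ bits. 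Summing the leaf indices, the Euler tour, and all edge labels yields $O(n\log n)+O(n\Lambda)+O(nd\Lambda)=O(nd\Lambda+n\log n)$. The only step requiring care is the node count --- in particular the bookkeeping that distinct pruned chains can be charged to distinct delimiter nodes --- while the rest is routine.
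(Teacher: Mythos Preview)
Your proof is correct and follows essentially the same approach as the paper: bound the number of tree nodes by $O(n\Lambda)$ via the pruning of degree-$1$ chains, charge $d$ bits per short edge and $O(\log L)$ bits per long edge (of which there are $O(n)$), and add $O(n\log n)$ for the leaf indices. You are in fact more explicit than the paper both in the node-counting bookkeeping and in justifying $\log L = O(\Lambda)$ from the fixed choice of $\Lambda$, which the paper simply asserts.
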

\begin{proof}
The tree $T$ has $n$ leaves, and we have pruned each non-branching path in it
to length $\Lambda$. Hence its total size is $O(n\Lambda)$, and its structure
can be stored with this many bits using (for example) the DFS scan described in~\Cref{s:compression}.
Each short edge label is $d$ bits long, so together they consume $O(nd\Lambda)$ bits.
As for the long edges, there can be at most $O(n)$ of them, since the bottom of
each long edge is either a branching node or a leaf. The long edge labels are lengths
of downward paths in the non-pruned tree $T^*$, whose height bounded by is $O(\log\Phi+\Lambda)$.
Together the long edge labels consume $O(n\log(\log\Phi+\Lambda))$ bits, which is dominated by $O(n\Lambda)$.
Finally for each point $x_i$ we store the index of its corresponding leaf $v_i$, and since there are
$n$ leaves, this requires $O(n\log n)$ additional bits to store.
\end{proof}

\begin{lemma}\label{lmm:runtime}
The QuadSketch construction algorithm runs in time $O(ndL)$.
\end{lemma}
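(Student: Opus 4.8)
The plan is to bound the running time phase by phase, following the description of the algorithm in~\Cref{s:compression}, and to show that the dominant cost is the level-by-level construction of the quadtree $T^*$. The preprocessing of Step~1 — computing $\Delta'=\max_i\norm{x_1-x_i}$, hence $\Delta$ and the enclosing cube $H$, and drawing the shifts $\sigma_1,\ldots,\sigma_d$ — takes $O(nd)$ time and is dominated by the later phases. For Step~2, I would build $T^*$ top down, one level at a time, maintaining for every current (non-empty) node the list of points of $X$ lying in its cell. To descend one level, each point is touched once: reading the appropriate bit of each of its $d$ coordinates determines in $O(d)$ time which child cell the point falls into, and the point is appended to that child's list (creating the child if it does not yet exist). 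Since every point lies in exactly one node per level, one level costs $O(nd)$, and as $T^*$ has $L$ levels the whole construction costs $O(ndL)$; in particular $T^*$ has at most $n$ nodes per level, hence $O(nL)$ nodes and edges overall.

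Next, Step~3 and the sketch encoding are handled by a single DFS traversal of $T^*$: identify each maximal downward path of degree-$1$ nodes and, if it has length exceeding $\Lambda+1$, replace its interior by a long edge labeled with the path length; then emit the Eulerian-tour encoding of the resulting tree $T$, writing the $d$-bit label of each short edge and the $O(\log(\log\Phi+\Lambda))$-bit label of each long edge. Since $T^*$ has $O(nL)$ nodes and edges, each carrying an $O(d)$-bit label, this phase runs in $O(ndL)$ time. (After pruning $T$ has only $O(n\Lambda)$ nodes, so encoding $T$ itself costs $O(nd\Lambda)$, which is dominated by $O(ndL)$; storing the leaf index $v_i$ of each point adds $O(n\log n)$, which is lower-order in the parameter regimes of interest.) Summing the phases yields the claimed $O(ndL)$ bound.

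The main obstacle is the bucketing inside the level descent: a node has up to $2^d$ candidate child cells, and we must avoid both an exponential-in-$d$ overhead and an $n\log n$-per-level overhead from sorting. I would handle this by indexing children lazily — for instance, storing a node's children in a hash table keyed by the $d$-bit child signature, or keeping each node's point list bucketed so that child nodes are materialized in the order the points are scanned — so that locating or creating the child of a given point costs $O(d)$ rather than $O(2^d)$ or $O(d\log n)$. If one instead sorts the points within each node by their child signature, the per-level cost becomes $O(nd\log n)$ and one obtains the weaker $\tilde O(ndL)$ bound quoted in~\cref{thm:ann,thm:max_distortion}; the clean $O(ndL)$ bound of this lemma relies on the hashing/bucketing implementation.
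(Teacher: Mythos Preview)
Your proposal is correct and follows essentially the same approach as the paper: bucketing each point one level down costs $O(d)$, there are $n$ points and $L$ levels, giving $O(ndL)$ for the quadtree construction, and the pruning step is a linear scan of $T^*$ in $O(nL)$ time. You are in fact more careful than the paper's own proof, which glosses over both the preprocessing/encoding phases and the child-indexing issue you highlight (avoiding a $2^d$ or $\log n$ overhead via hashing).
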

\begin{proof}
Given a quadtree cell and a point contained in it, in order to bucket the point into a cell
in the next level, we need to check for each coordinate whether the point falls in the upper
or lower half of the cell. This takes time $O(d)$. Since each point is bucketed once in every
level, and we generate $T^*$ for $L$ levels, the quadtree construction time is $O(ndL)$.
The pruning step requires just a linear scan of $T^*$, in time $O(nL)$.
\end{proof}

\subsection{Maximum Distortion}
We now prove~\Cref{thm:max_distortion}.

\paragraph{Sketching algorithm}
Given a pointset $X$, apply QuadSketch to $X$ and let $T_1$ be the resulting tree.
Let $Q\subset X$ be the padded points in $T_1$
(meaning those for which the condition of~\cref{lmm:padding} is satisfied for $T_1$).
Continue by recursion on $X\setminus Q$,
until all points in $X$ are padded in some tree. The returned sketch contains all trees $T_1,\ldots,T_k$
constructed during the recursion, and in addition, for every point $x_i$ we store the index $\gamma_i$
of the tree in which it is padded.

\paragraph{Query algorithm}
Given two point indices $i,j$, assume w.l.o.g.~$\gamma(i)\leq\gamma(j)$, then the tree $T_{\gamma(i)}$
has corresponding leaves for both $x_i$ and $x_j$. We decompress $\tilde x_i$ and $\tilde x_j$
from $T_{\gamma(i)}$ and return $\norm{\tilde x_i-\tilde x_j}$.

\paragraph{Analysis}
The correctness of the estimate up to distortion $1\pm\epsilon$ follows from~\cref{lmm:distances}.
We now bound the sketch size and the running time. \Cref{lmm:padding} with $\delta=0.25$ implies
that in each of the trees $T_1,\ldots,T_k$, the expected fraction of padded points is $0.75$.
Hence by Markov's inequality, with probability $0.5$ at least half the points are padded.
Since the calls to QuadSketch are independent (and its success probability depends only on its
internal randomness and not on the input points), with probability $0.5^{\lceil\log_2n\rceil}\sim1/n$ this happens
in each of the first $k=\lceil\log_2n\rceil$ iterations. This probability can be amplified to constant by $O(\log n)$
independent repetitions. If this event has happened then the sketching algorithm terminates since $Q$
becomes empty. Therefore the total running time of a successful execution
is $O(\log^2n)$ calls to QuadSketch, which by~\cref{lmm:runtime} is  $\tilde O(ndL)$.

Furthermore, since the number of padded points decreases by at least half in every iteration, the total size
of the sketches $T_1,\ldots,T_k$ is
\[ O\left(\sum_{k'=0}^{k-1}\frac{n}{2^{k'}}(d\Lambda+\log \frac{n}{2^{k'}})\right)=O(n(d\Lambda+\log n)) , \]
the same as in~\cref{thm:ann} up to a constant factor. Finally, since each $\gamma(i)$ is index in
$\{1,\ldots,\lceil\log_2n\rceil\}$, the $\gamma(i)$'s only take additional $O(n\log\log n)$ bits to store. \qed

\section*{Acknowledgements}
 This research was supported by grants from NSF, Simons Foundation, and Shell (via MIT Energy Initiative).

\bibliography{example_paper}
\bibliographystyle{amsalpha}

\end{document}